\newtheorem{thm}{Theorem}[section]
\newtheorem{lem}[thm]{Lemma}
\theoremstyle{definition}
\newtheorem{as}[thm]{Assumption}
\theoremstyle{remark}
\newtheorem{rem}[thm]{Remark}
\numberwithin{equation}{section}
\newcommand{\A}{\mathcal{A}}
\newcommand{\Z}{\mathcal{Z}}
\newcommand{\X}{\mathcal{X}}
\newcommand{\Y}{\mathcal{Y}}
\newcommand{\R}{\mathbb{R}}
\newcommand{\expec}{\mathbb{E}}
\newcommand{\PP}{\mathbb{P}}
\newcommand{\F}{\mathcal{F}}
\newcommand{\ud}{\mathrm d}
\newcommand{\cO}{\mathcal{O}}
\newcommand{\dbracc}[1]{[\kern-0.15em[ #1 ]\kern-0.15em]}
\newcommand{\dbraco}[1]{[\kern-0.15em[ #1 [\kern-0.15em[}
\newcommand{\dbraoc}[1]{]\kern-0.15em] #1 ]\kern-0.15em]}
\newcommand{\dbraoo}[1]{]\kern-0.15em] #1 [\kern-0.15em[}
\newcommand{\be}{\begin{equation}}
\newcommand{\ee}{\end{equation}}
\newcommand{\ba}{\begin{aligned}}
\newcommand{\ea}{\end{aligned}}
\newcommand{\ind}{\mathbb{I}}
\DeclareMathOperator{\cl}{cl}
\begin{document}
\title[Optimal consumption of multiple goods]{Optimal consumption of multiple goods in incomplete markets}%

\author{Oleksii Mostovyi}

\address{Oleskii Mostovyi, Department of Mathematics, University of Connecticut (USA)}%
\email{oleksii.mostovyi@uconn.edu}%
\thanks{
The author would like to thank Robert C. Merton for suggesting this problem and for a discussion on the subject of the paper. The author is also thankful to an anonymous referee for valuable comments. The author's research is supported by NSF Grant DMS-1600307. Any opinions, findings and conclusions or recommendations expressed in this material are those of the author and do not necessarily reflect those of the National Science Foundation.}
\date{\today}%
\subjclass[2010]{91G10, 93E20. \textit{JEL Classification:} C61, G11.}
\keywords{optimal consumption, multiple goods, utility maximization, no unbounded profit with bounded risk, arbitrage of the first kind, local martingale deflator, duality theory, semimartingale, incomplete market, optimal investment}%

\begin{abstract}
We consider the problem of optimal consumption of multiple goods in incomplete semimartingale markets. We formulate the dual problem and identify conditions that allow for existence and uniqueness of the solution and give a characterization of the optimal consumption strategy in terms of the dual optimizer.
 We illustrate our results with examples in both complete and incomplete models. In particular, we construct closed-form solutions in some incomplete models.
\end{abstract}

\maketitle

\section{Introduction}
The problem of optimal consumption of multiple goods has been investigated in \cite{Fisher75, Breeden79}.
For a single consumption good in continuous-time settings, it was first formulated  in \cite{Merton69}. Since then, this problem was analyzed in a large number of papers in both complete and incomplete settings with a range of techniques based on Hamilton-Jacobi-Bellman equations, backward stochastic differential equations, and convex duality being used for its analysis.

In the present paper, we formulate a problem of optimal consumption of multiple goods in a general incomplete semimartingale model of a financial market. We construct the dual problem and characterize optimal consumption policies in terms of the solution to the dual problem. We also identify mathematical conditions, that allow for existence and uniqueness of the solution and a dual characterization. We illustrate our results by examples, where in particular we obtain closed-form solutions in incomplete markets. Our proofs rely on certain results on weakly measurable correspondences for Carath\'eodory functions, multidimensional convex-analytic techniques, and some recent advances in stochastic analysis in mathematical finance, in particular,  the characterization of the ``no unbounded profit with bounded risk'' condition in terms of non-emptiness of the set of equivalent local martingale deflators from \cite{MostovyiNUPBR, KabanovKardarasSong} and sharp conditions for solvability of the expected utility maximization problem in a single good setting from \cite{Mostovyi2015}. 

The remainder of this paper is organized as follows: in Section \ref{sec:2} we specify the model setting, formulate the problem, and state main results (in Theorem \ref{mainTheorem}). In Section \ref{Examples} we discuss various specific cases. In particular, we present there the structure of the solution in complete models and the additive utility case as well as closed-form solutions in some incomplete models (with and without an additive structure of the utility). We conclude the paper with Section \ref{proofs}, which contains proofs.

\section{Setting and main results}	\label{sec:2}

\subsection{Setting}	\label{sec:setting}

Let $\widetilde S=(\widetilde S_t)_{t\geq0}$ an $\R^d$-valued semimartingale, representing the discounted prices\footnote{Since we allow preferences to be stochastic (see the definition below), there is no loss of generality in assuming that asset prices are discounted, see \cite[Remark 2.2]{Mostovyi2015} for a more detailed explanation of this observation.} of $d$ risky assets on a complete stochastic basis $(\Omega,\F,(\F_t)_{t\in[0,\infty)},\PP)$, with $\F_0$ being 
the trivial $\sigma$-algebra. 
We fix a \emph{stochastic clock} $\kappa=(\kappa_t)_{t\geq0}$, which is a nondecreasing, c\`adl\`ag, adapted process, such that
\be	\label{clock}
\kappa_0 = 0,
\qquad
\PP(\kappa_{\infty}>0)>0
\qquad\text{and}\qquad
\kappa_{\infty}\leq \bar A,
\ee
where $\bar A$ is a positive constant.
The stochastic clock $\kappa$ specifies times when consumption is assumed to occur. 
Various optimal investment-consumption problems can be recovered from the present general setting by suitably specifying the clock process $\kappa$. For example, the problem of maximizing expected utility  of terminal wealth at some finite investment horizon $T<\infty$ can be recovered by simply letting $\kappa\triangleq\ind_{\dbraco{T,\infty}}$. Likewise, maximization of expected utility from consumption only up to a finite horizon $T<\infty$ can be obtained by letting $\kappa_t\triangleq \min(t, T)$, for $t\geq0$.
Other specifications include maximization of utility form lifetime consumption, from consumption at a finite set of stopping times, and from terminal wealth at a random horizon, see e.g.,\cite[Examples 2.5-2.9]{Mostovyi2015} for a description of possible standard choices of the clock process~$\kappa$.


We suppose that there are $m$ different consumption goods, where $S^k_t$ denotes the discounted price of commodity $k$ at time $t$. We assume that for each $k\in\{1,\dots, m\}$, $S^k=(S^k_t)_{t\geq0}$ is a {\it strictly} positive optional processes
 on $(\Omega,\F,(\F_t)_{t\in[0,\infty)},\PP)$.

A \emph{portfolio} is defined by a triplet $\Pi=(x,H,c)$, where $x\in\R$ represents an initial capital, $H=(H_t)_{t\geq0}$ is a $d$-dimensional $\widetilde S$-integrable process, 
$H^j_t$ represents the holdings in the $j$-th risky asset at time $t$, $j \in\{1,\dots,d\}$, $t\geq 0$, $c$ is an $m$-dimensional {\it consumption process}, whose every component $(c^k_t)_{t\geq0}$ is a nonnegative optional process representing the consumption {\it rate} of commodity $k$, $k = \{1,\dots, m\}$. 
The 
 wealth process $X=(X_t)_{t\geq0}$ of a portfolio $\Pi=(x,H,c)$ is defined as
\be\label{X}
X_t \triangleq x + \int_0^tH_u\,\ud \widetilde S_u - \sum\limits_{k=1}^m\int_0^tc^k_uS^k_u\,\ud \kappa_u,
\qquad t\geq0.
\ee
\subsection{Absence of arbitrage}	\label{sec:NUPBR}
The main objective of this part is to specify the no-arbitrage type condition \eqref{NUPBR} below. As it is commonly done in the literature (see for example \cite{KS99}), we begin defining $\X$ to be the collection of all nonnegative wealth processes associated to portfolios of the form $\Pi=(1,H,0)$,~i.e.,
\[
\X \triangleq \left\{X\geq0 : X_t=
1 + \int_0^tH_u\ud \widetilde S_u,\quad t\geq0\right\}.
\]

In this paper, we suppose the following no-arbitrage-type condition:
\be	\tag{NUPBR}	\label{NUPBR}
\text{the set } 
\X_T \triangleq \bigl\{X_T : X\in\mathcal{X}\bigr\} 
\text{ is bounded in probability, for every }T\in\R_+,
\ee
where (NUPBR) stands for {\it no unbounded profit with bounded risk}. This condition was originally introduced in 
\cite{KaratzasKardaras07}. It is proven in \cite[Proposition 1]{Kardaras10}, that \eqref{NUPBR} is equivalent to another (weak) no-arbitrage condition, namely absence of \emph{arbitrages of the first kind} on $[0,T]$, see \cite[Definition 1]{Kar14}. 

A useful characterization of \eqref{NUPBR} is given via the set of  \emph{equivalent local martingale deflators (ELMD)} that is defined as follows:
\begin{equation}\label{setZ}
\begin{array}{rl}
\mathcal{Z} \triangleq \bigl\{ Z >0 :\;& Z \text{ is a c\`adl\`ag local martingale such that  } Z_0 =1 \text{ and } \\
& ZX = (Z_tX_t)_{t\geq 0} \text{ is a local martingale for every } X \in \mathcal{X} \bigr\}.
\end{array}
\end{equation}

It is proven in \cite[Proposition 2.1]{MostovyiNUPBR} (see also \cite{KabanovKardarasSong}) that condition \eqref{NUPBR} holds if and only if $\Z\neq\emptyset$.
This result was previously established in the one-dimensional case in the finite time horizon in \cite[Theorem 2.1]{Kardaras12}. Also, \cite[Theorem 2.6]{TS14} contains a closely related result (in a finite time horizon) in terms of {\it strict $\sigma$-martingale densities}, see \cite{TS14} for the corresponding definition and details.


\begin{rem}	\label{rem:comp_mostovyi}
Condition \eqref{NUPBR} is weaker than the existence of an equivalent martingale measure (see for example \cite[p. 463]{DS94} for the definition an equivalent martingale measure), another classical no-arbitrage type assumption, which in the infinite time horizon is  even stronger than
\begin{equation}\label{NFLVR}
\{Z\in\Z : Z\text{ is a martingale}\}\neq\emptyset.
\end{equation}
Note that in the {\it finite time horizon} setting, \eqref{NFLVR} is equivalent to the existence of an equivalent martingale measure. Besides, \eqref{NFLVR} is apparently stronger than \eqref{NUPBR} (by comparison of \eqref{setZ} and \eqref{NFLVR} combined with \cite[Proposition 2.1]{MostovyiNUPBR}).  We also would like to point out that \eqref{NFLVR} holds in every original formulation of \cite{Merton69}, where the problem of optimal consumption from investment (in a single consumption good setting) was introduced, including the infinite-time horizon case. In general, \eqref{NFLVR} can be stronger than \eqref{NUPBR}.  A classical example, where \eqref{NUPBR} holds but \eqref{NFLVR} fails, corresponds to the three-dimensional Bessel process driving the stock price, see e.g., \cite[Example 4.6]{KaratzasKardaras07}.
\end{rem}
\subsection{Admissible consumptions}
For a given initial capital $x>0$, an $m$-dimensional optional consumption process $c$ is said to be \emph{$x$-admissible} if there exists an $\R^d$-valued predictable $\widetilde S$-integrable process $H$ such that the wealth process $X$ in \eqref{X}, corresponding to the portfolio $\Pi=(x,H,c)$ is nonnegative; the set of $x$-admissible consumption processes corresponding to a stochastic clock $\kappa$ is denoted by $\A(x)$.
For brevity, we denote $\A\triangleq\A(1)$.

\subsection{Preferences of a rational economic agent}	\label{sec:invest}
Building from the formulation of \cite{MertonCTF}, we assume that preferences of a rational economic agent are represented by a \emph{optional utility-valued process} (or simply a {\it utility process}) $U=U(t,\omega,x):[0,\infty)\times\Omega\times[0,\infty)^m\rightarrow\R\cup\{-\infty\}$, where for every $(t, \omega)\in [0,\infty)\times \Omega$,  $U(t,\omega, \cdot)$ is an Inada-type utility function, i.e., $U(t,\omega, \cdot)$ satisfies the following (technical) assumption.

\begin{as}	\label{as:U}
For every $(t,\omega)\in[0,\infty)\times\Omega$, the function $$\R^m_{+} \ni x\mapsto U(t,\omega,x)\in \R\cup\{-\infty\}$$ is strictly concave, strictly increasing in every component,  finite-valued and continuously differentiable in the interior of the positive orthant, and satisfies the Inada conditions
\[
\underset{x_i\downarrow0}{\lim} \, {\partial_{x_i}U}(t,\omega,x) = \infty
\quad\text{and}\quad
\underset{x_i\uparrow\infty}{\lim} \, {\partial_{x_i}U}(t,\omega,x) = 0,\quad i = 1,\dots, m,
\]
where ${\partial_{x_i}U}(t,\omega, \cdot):\R^m_{++} \mapsto\R$ is the partial derivative of $U(t,\omega,\cdot)$ with respect to the $i$-th spatial variable
\footnote{For the results below, we only need to specify the gradient of $U(t,\omega, \cdot)$ in the {\it interior} of the first orthant, i.e., at the points $x\in\R^m$, where $U(t,\omega, x)$ is (finite-valued and) differentiable.}. On the boundary of the first orthant, by upper semicontinuity, we {suppose} that $U(t,\omega,x)=\limsup\limits_{x'\to x}U(t,\omega,x')$ (note that some of these values may be $-\infty$ and that $U(t,\omega,x) = \lim\limits_{t\downarrow 0}U(t,\omega,x + t(x'-x))$, where $x'$ is an arbitrary element in the interior of the first orthant, see \cite[Proposition B.1.2.5]{LH04}). Finally, for every $x\in \R^m_{+}$, we assume that the stochastic process $U(\cdot,\cdot,x)$ is optional.
\end{as}

\begin{rem} The Inada conditions in Assumption \ref{as:U} were introduced in \cite{Inada63}. These are technical assumptions that have natural economic interpretations and that allow for a deeper tractability of the problem (as e.g., in \cite{KS99}). 
Likewise, the semicontinuity of $U$ is imposed for regularity purposes. It also used in e.g., \cite{Pietro1, Pietro2}.
 \end{rem}
In particular, modeling preferences via utility process allows to take into account utility maximization problems under a change of num\'eraire (see e.g., \cite[Example 4.2]{MostovyiRE}). This is the primary reason why we suppose that the prices of the traded stocks are discounted, as this allows to simplify notations without any loss of generality.
Note also that Assumption \ref{as:U} does not make any requirement on the {\it asymptotic elasticity} of $U$, introduced in \cite{KS99}.

To a utility process $U$ satisfying Assumption \ref{as:U}, we  associate the \emph{primal value function}, defined as
\be	\label{primalProblem}
u(x) \triangleq \sup_{{\mathbf c} = (c^1,\dots,c^m)\in\A(x)}\expec\left[\int_0^{\infty}U(t,\omega,{\mathbf c}_t)\,\ud\kappa_t\right],\quad x>0.
\ee
To ensure that the integral above is well-defined, we use the convention 
\be\label{9171}
\expec\left[\int_0^{\infty}U(t,\omega,{\mathbf c}_t)\,\ud\kappa_t\right]\triangleq-\infty\quad  if\quad \expec\left[\int_0^{\infty}U^-(t,\omega,{\mathbf c}_t)\,\ud\kappa_t\right]=\infty,
\ee
where $U^-(t,\omega,\cdot)$ is the negative part of $U(t,\omega,\cdot)$. Note that formulation \eqref{primalProblem} is a generalization  of the formulation in \cite[p. 205]{MertonCTF}, in the form \eqref{primalProblem}  we allow for stochastic preferences and include several standard formulations as particular cases.

\subsection{Dual problem}	\label{sec:dualProblem}
In order to specify model assumptions that ensure existence and uniqueness of solutions to \eqref{primalProblem} and to give a characterization of this solution, we need to formulate an appropriate dual problem. 
Let us define
\be\label{def:U*}
U^{*}(t,\omega, x)\triangleq \sup\limits_{\substack{(x_1,\dots,x_m)\in\R^m_{+}:\\\sum\limits_{k = 1}^mS^k_t(\omega)x_k \leq x}}U\left(t,\omega,x^1,\dots,x^m\right),\quad (t,\omega,x)\in [0,\infty)\times \Omega\times[0,\infty).
\ee

Let us set a family of transformations $A:[0,\infty)\times\Omega\times\R^m\mapsto \R$, as $$A (t,\omega, x_1,\dots,x_m)\triangleq S^1_t(\omega)x_1 + \dots + S^m_t(\omega)x_m,\quad (t,\omega,x_1,\dots,x_m)\in[0,\infty)\times\Omega\times[0,\infty)^m.$$
Note that for every $(t,\omega)\in[0,\infty)\times \Omega$, $A(t,\omega,\cdot)$ is a linear transformation from $\R^m$ to $\R$ and $U^{*}(t,\omega,\cdot)$ is the image of $U(t,\omega,\cdot)$ under $A(t,\omega,\cdot)$ (see e.g., \cite[p. 96]{LH04} for the definition and properties of the {\it image of a  function under a linear mapping}\footnote{Equivalently, see \cite[Theorem 5.2]{Rok}, where $U^*(t,\omega, \cdot)$ is named the image of $U(t,\omega,\cdot)$ under the linear transformation $A(t,\omega,\cdot)$
, $(t,\omega)\in[0,\infty)\times\Omega$.}). 
We define a stochastic field $V^{*}$ as the pointwise conjugate of $U^{*}$ (equivalently, as the pointwise conjugate of the image function of $U$ under $A$) in the sense that
\[
V^{*}(t,\omega,y) \triangleq \sup_{x>0}\left(U^*(t,\omega,x)-xy\right),
\qquad (t,\omega,y)\in[0,\infty)\times\Omega\times[0,\infty),
\]
where $\sup\limits_{x>0}$ and $\sup\limits_{x\geq 0}$ coincide thanks to continuity of $U^*$ established in Lemma \ref{lem:U*}.
We also introduce the following set of dual processes:
\begin{align*}
\Y(y) \triangleq \cl\bigl\{Y :\;& Y\text{ is c\`adl\`ag adapted and }\\
&
0\leq Y\leq yZ \text{ $(\ud\kappa\times\PP)$-a.e. for some }Z\in\Z\bigr\},
\end{align*}
where the closure is taken in the topology of convergence in measure $(\ud\kappa\times\PP)$ on the measure space of real-valued optional processes $\left(\Omega\times [0,\infty), \cO,\ud\kappa\times\PP\right)$, where $\cO$ is the optional sigma-field.  We write $\Y\triangleq\Y(1)$ for brevity. Note that $\mathcal Y$ is closely related to - but different from - the set with the same name in \cite{KS99}.
The value function of the dual optimization problem, or equivalently, the \emph{dual value function}, is then defined as
\be	\label{dualProblem}
v(y) \triangleq \inf_{Y\in\Y(y)}\expec\left[\int_0^{\infty}V^{*}(t,\omega,Y_t)\,\ud\kappa_t\right]{,\quad y>0},
\ee
with the convention $\expec[\int_0^{\infty}V^{*}(t,\omega,Y_t)\,\ud\kappa_t]\triangleq\infty$ if $\expec[\int_0^{\infty}{V^{*}}^+(t,\omega,Y_t)\,\ud\kappa_t]=\infty$, where ${V^{*}}^+(t,\omega,\cdot)$ is the positive part of ${V^{*}}(t,\omega,\cdot)$.
We are now in a position to state the following theorem, 
which is the main result of this paper.
\begin{thm}	\label{mainTheorem}
Assume that conditions \eqref{clock} and \eqref{NUPBR} hold true and let $U$ satisfies Assumption \ref{as:U}. Let us also suppose that
\be	\label{eq:finiteness}
v(y)<\infty\quad\text{ for every }y>0
\quad\text{and}\quad
u(x)>-\infty\quad\text{ for every }x>0.
\ee
Then we have 
\begin{enumerate}
\item[(i)]
$u(x)<\infty$, for every $x>0$, and $v(y)>-\infty$, for every $y>0$, i.e., the value functions are \underline{finite-valued}.
\item[(ii)]
The functions $u$ and $-v$ are continuously differentiable on $(0,\infty)$, strictly concave, strictly increasing and satisfy the Inada conditions
\begin{equation}\label{453}
\begin{array}{rclccrcl}
\underset{x\downarrow0}{\lim} \, u'(x) & = &\infty, &\quad&
\underset{y\downarrow0}{\lim} \, -v'(y) &=& \infty,	\\
\underset{x\rightarrow\infty}{\lim} \, u'(x) &=& 0, &\quad&
\underset{y\rightarrow\infty}{\lim} \, -v'(y) &=& 0. \\
\end{array}
\end{equation}
\item[(iii)]
For every $x>0$ and $y>0$, the solutions $\widehat{c}(x)=(\widehat{c}^1(x),\dots, \widehat{c}^m(x))$ to \eqref{primalProblem} and $\widehat{Y}(y)$ to \eqref{dualProblem} exist and are unique and, if $y=u'(x)$, we have the optimality characterizations
\be\label{eq:3212}
\widehat Y_t(y)(\omega)=\frac{
{\partial_{x_i}U}\bigl(t,\omega, \widehat c^1_t(x)(\omega),\dots,\widehat c^m_t(x)(\omega)\bigr)}{S^i_t(x)(\omega)},\quad (\ud\kappa\times \PP)\text{-a.e.},\quad i = 1,\dots, m.
\ee
and 
\be\label{eq:3221}
\hat{Y}_t(y)(\omega) = U^*_x\bigl(t,\omega,\sum\limits_{i = 1}^m\hat{c}^i_t(x)(\omega) S^i_t(\omega) \bigr),
\qquad (\ud\kappa\times\PP)\text{-a.e.},
\ee
with ${U_x^{*}}$ denoting the partial derivative of $U^*$ with respect to its third argument.
\item[(iv)]
For every $x>0$, the constraint $x$ is binding in the sense that
\be\label{451}
\mathbb{E}\left[\int_0^{\infty}\sum\limits_{i = 1}^m\widehat{c}^i_t(x)S^i_t\dfrac{\widehat{Y}_t(y)}{y}\,\ud\kappa_t\right] = x,\quad where~y=u'(x).
\ee
\item[(v)] 
 The functions $u$ and $v$ are Legendre conjugate, i.e., 
\begin{equation}\label{452}
v(y) = \underset{x>0}{\sup} \bigl(u(x)-xy\bigr),\quad y>0, \qquad
u(x) = \underset{y>0}{\inf} \bigl(v(y)+xy\bigr),\quad x>0.
\end{equation}

\item[(vi)]
The dual value function $v$ can be represented as
\be	\label{eq:v_defl}
v(y) = \inf_{Z\in\mathcal Z}\expec\left[\int_0^{\infty}V(t,\omega,yZ_t(\omega))\,\ud\kappa_t(\omega)\right],\quad y>0.
\ee

\end{enumerate}

\end{thm}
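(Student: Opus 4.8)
The plan is to reduce the multiple-good problem to a single-good problem governed by the indirect utility $U^*$ and then to invoke the single-good duality theory of \cite{Mostovyi2015}. The starting observation is that, by \eqref{X}, the wealth process of a portfolio $(x,H,\mathbf c)$ depends on the consumption $\mathbf c$ only through the scalar expenditure process $C_t\triangleq A(t,\omega,\mathbf c_t)=\sum_{k=1}^m c^k_t S^k_t$. Consequently $\mathbf c\in\A(x)$ if and only if $C$ belongs to the single-good admissible set $\bar\A(x)$ built from the same $\X$ and $\kappa$ (one direction is the trivial selection $c^1=C/S^1$, $c^k=0$ for $k\geq2$, which is optional since $S^1>0$), and for a fixed expenditure the pointwise inner maximization of $U(t,\omega,\cdot)$ over $\{\mathbf x\in\R^m_+:A(t,\omega,\mathbf x)\leq C_t\}$ yields exactly $U^*(t,\omega,C_t)$. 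I would therefore first establish $u(x)=\bar u(x)$, where $\bar u(x)\triangleq\sup_{C\in\bar\A(x)}\expec[\int_0^\infty U^*(t,\omega,C_t)\,\ud\kappa_t]$. The inequality $u\leq\bar u$ is immediate on passing to $C=A(\mathbf c)$; the reverse inequality $u\geq\bar u$ requires producing, for a given scalar $C$, an optional $\R^m_+$-valued process attaining the pointwise supremum, which is the first place a measurable selection argument enters.

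Before the reduction can be used I must verify that $U^*$ is a genuine single-good utility field, i.e.\ that it satisfies the scalar analogue of Assumption \ref{as:U}; this is the content of the auxiliary Lemma \ref{lem:U*}. Since $U^*(t,\omega,\cdot)$ is the image of $U(t,\omega,\cdot)$ under the linear surjection $A(t,\omega,\cdot):\R^m\to\R$, the convex-analytic theory of images of concave functions under linear maps (\cite[Theorem 5.2]{Rok}, \cite[p.~96]{LH04}) supplies concavity, monotonicity, and upper semicontinuity of $U^*$ at once. The points needing genuine work are strict concavity and continuous differentiability on $(0,\infty)$ together with the scalar Inada relations $U^*_x(t,\omega,0+)=\infty$ and $U^*_x(t,\omega,\infty)=0$: strict concavity and differentiability follow because the defining supremum is attained at a unique interior point (strict concavity of $U$ and the Inada conditions force interiority whenever $x>0$), after which the envelope theorem gives $U^*_x$, while the two limiting relations are inherited from the Inada behaviour of the components of $\nabla U$ as the budget set collapses to $\{0\}$ or expands to all of $\R^m_+$. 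Optionality of $U^*(\cdot,\cdot,x)$ is again a measurability statement for the value of a parametrized concave program.

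With $u=\bar u$ and $U^*$ a valid scalar utility in hand, the dual objects coincide with the single-good theory exactly: $V^*$ is the scalar conjugate of $U^*$, and $\Y(y)$ is generated from $\Z$ precisely as in \cite{Mostovyi2015}. Hence the finiteness hypothesis \eqref{eq:finiteness} is exactly the scalar solvability condition, and applying the single-good duality theorem of \cite{Mostovyi2015} delivers at once statements (i), (ii), (iv), (v) for $u=\bar u$ and $v$, together with existence and uniqueness of the scalar primal optimizer $\widehat C(x)\in\bar\A(x)$ and of the dual optimizer $\widehat Y(y)\in\Y(y)$, and the scalar optimality relation $\widehat Y_t(y)=U^*_x(t,\omega,\widehat C_t(x))$ under $y=u'(x)$, which is precisely \eqref{eq:3221}. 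Part (vi) is the representation of $v$ as an infimum directly over the deflators $\Z$ furnished by the same theorem, reflecting that the infimum in \eqref{dualProblem} over $\Y(y)$ is attained on the generating family $\{yZ:Z\in\Z\}$.

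It remains to construct the multiple-good optimizer $\widehat{\mathbf c}(x)$ and to prove \eqref{eq:3212}, where the bulk of the new work lies. Given the optimal expenditure $\widehat C(x)$, I would define $\widehat{\mathbf c}_t(\omega)$ to be the unique maximizer of $U(t,\omega,\cdot)$ over $\{\mathbf x\in\R^m_+:A(t,\omega,\mathbf x)\leq\widehat C_t(x)(\omega)\}$; uniqueness comes from strict concavity of $U$, and strict monotonicity forces the budget to bind, so $A(t,\omega,\widehat{\mathbf c}_t)=\widehat C_t(x)$ and thus $\widehat{\mathbf c}(x)\in\A(x)$ shares the wealth process of $\widehat C(x)$, making it the unique primal optimizer. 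The Karush--Kuhn--Tucker conditions for the static program, valid at the interior optimum guaranteed by the Inada conditions on $U$ whenever $\widehat C_t(x)>0$, read $\partial_{x_i}U(t,\omega,\widehat{\mathbf c}_t)=\lambda_t S^i_t$ for a common multiplier $\lambda_t$, and the envelope theorem identifies $\lambda_t=U^*_x(t,\omega,\widehat C_t(x))=\widehat Y_t(y)$; dividing yields \eqref{eq:3212}. The principal obstacle throughout is measurability: one must guarantee that the correspondence $(t,\omega)\mapsto\arg\max\{U(t,\omega,\mathbf x):A(t,\omega,\mathbf x)\leq\widehat C_t(x)(\omega)\}$ admits an \emph{optional} selector and that $\widehat{\mathbf c}$ is $\ud\kappa\times\PP$-a.e.\ well defined. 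I would settle this with measurable (Carath\'eodory) selection theorems applied to the normal integrand $U$, which is also what secures the nontrivial inequality $u\geq\bar u$ in the reduction and the optionality of $U^*$.
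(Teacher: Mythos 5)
Your proposal is correct and follows essentially the same route as the paper: reduce to the scalar expenditure problem with the image utility $U^*$ (verified to be an Inada-type scalar utility as in Lemma \ref{lem:U*}), invoke the single-good duality theorem of \cite{Mostovyi2015} for (i), (ii), (iv), (v), (vi) and \eqref{eq:3221}, and recover the multi-good optimizer and \eqref{eq:3212} by a measurable (Carath\'eodory) selection from the pointwise budget-constrained argmax, with the first-order/envelope identification of the multiplier. The only cosmetic difference is that the paper proves just the inequality $u^{*}\geq u$ directly and obtains the reverse a posteriori from the constructed selector, whereas you state the identity $u=\bar u$ up front; the substance is identical.
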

\begin{rem}[On sufficient conditions for the validity of \eqref{eq:finiteness}]
Condition \eqref{eq:finiteness} holds if there exists one primal element $c\in\mathcal A$ and one dual element $Y\in\mathcal Y$ such that 
$$\mathbb E\left[\int_0^\infty U\left(t,\omega,zc^1_t,\dots,zc^m_t\right)\ud\kappa_t\right]>-\infty\quad and\quad \mathbb E\left[\int_0^\infty V^{*}\left(t,\omega,zY_t\right)\ud\kappa_t\right]<\infty,\quad z>0.$$
In particular, 
for every $x>0$, as an $m$-dimensional optional process with constant values $\left(\tfrac{x}{\bar Am},\dots,\tfrac{x}{\bar Am}\right)$ belongs to $\mathcal A(x)$, a sufficient condition in \eqref{eq:finiteness} for the finiteness of $u$ is
$$\mathbb E\left[\int_0^\infty U\left(t,\omega,\tfrac{x}{\bar Am},\dots,\tfrac{x}{\bar Am}\right)\ud\kappa_t\right]>-\infty,\quad x>0,$$
which typically holds if $U$ is nonrandom. Likewise, as $\Z\neq\emptyset$ (by \eqref{NUPBR} and \cite[Proposition 2.1]{MostovyiNUPBR}), finiteness of $v$ holds if for one equivalent local martingale deflator $Z$, we have
$$ \mathbb E\left[\int_0^\infty V^{*}\left(t,\omega,yZ_t\right)\ud\kappa_t\right]<\infty,\quad y>0.$$
\end{rem}

\section{Examples}\label{Examples}
\section*{Complete market solution and dual characterization}
If the model is complete, the dual characterization of the optimal consumption policies has a particularly nice form,
as $\Z$ contains a unique element, $Z$. The solutions corresponding to different $y$'s in the dual problem \eqref{dualProblem} are $yZ$, $y>0$. Therefore, in \eqref{eq:3221} and \eqref{eq:3212} we have $\widehat Y(y)= yZ$, $y>0$.

\section*{Special case: Additive utility}

An important example of $U^{*}$ corresponds to $U$ having an additive form with respect to its spatial components, i.e., when
$$U(t,\omega, c_1,\dots, c_m) = U^1(t,\omega,c_1)+\dots+U^m(t,\omega, c_m),\quad (t,\omega)\in[0,\infty)\times \Omega,$$
where for every $k=1,\dots,m$, $U^k$ is a utility process in the sense of \cite[Assumption 2.1]{Mostovyi2015} and a utility process in sense of the Assumption \ref{as:U} with $m=1$. In this case, for every $(t,\omega)\in[0,\infty)\times\Omega$, $U^*(t,\omega,\cdot)$ is given by the {\it infimal convolution} of $U^k(t,\omega,\cdot)$'s, see the definition in e.g., \cite[p. 34]{Rok}.  Let $V^i(t,\omega,\cdot)$ denote the convex conjugate of $U^i(t,\omega,\cdot)$, $i=1,\dots,m$. Then the convex conjugate of $U^{*}(t,\omega,\cdot)$ is $V^{*}(t,\omega, \cdot)$ given by 
$$V^*(t,\omega,\cdot) = V^1(t,\omega,\cdot) + \dots +V^m(t,\omega,\cdot).$$
 This result was established  e.g., in \cite[Theorem 16.4, p. 145]{Rok}. In this case, the optimal $\widehat c(x) = (\widehat c^1(x),\dots,\widehat c^m(x))$ has a more explicit characterization via $I_i(t,\omega,\cdot) \triangleq \left(U^i_x\right)^{-1}(t,\omega,\cdot)$, the the pointwise inverse of the partial derivative of $U^i(t,\omega,\cdot)$ with respect to the third argument, as \eqref{eq:3212} can be solved for $\widehat c^i(x)$, $i = 1,\dots,m$, as follows
\be\label{eq:3261}
\widehat c^i_t(x)(\omega)= I_i\left(t,\omega,\widehat Y_t(y)(\omega)S^i_t(\omega)\right),\quad (\ud\kappa\times \PP)\text{-a.e.},\quad i = 1,\dots, m.
\ee
Using \eqref{eq:3221}, we can restate \eqref{eq:3261} as
\be\nonumber
\widehat c^i_t(x)(\omega)= I_i\left(t,\omega,U^*_x\bigl(t,\omega,\widehat c^{*}_t(x)(\omega) \bigr)S^i_t(\omega)\right),\quad (\ud\kappa\times \PP)\text{-a.e.},\quad i = 1,\dots, m,
\ee
where $\widehat c^{*}(x)$ is the optimizer to the auxiliary problem \eqref{eq:u*} corresponding to the initial wealth $x>0$. 

\begin{rem}
In the following three examples we consider some incomplete models that admit closed-form solutions for one good and show how these results apply to multiple good settings.  
\end{rem}

\section*{Example of a closed form solution in an incomplete model with additive logarithmic utility}
Let us suppose that $d$ traded discounted assets are modeled with Ito processes of the form
\begin{equation}\label{4161}
d\widetilde S^i_t = \widetilde S^i_t b^i_tdt + \widetilde S^i_t \sum\limits_{j=1}^n \sigma^{ij}_tdW^j_t,\quad i = 1,\dots,d,\quad \widetilde S_0\in\R^d,
\ee
where $W$ is an $\R^n$-valued standard Brownian motion and $b^i$, $\sigma^{ij}$, $i = 1,\dots,d$, $j=1,\dots, n$, are predictable processes, such that the unique strong solution to \eqref{4161} exists, see e.g., \cite{KS98}. Let us suppose that there are $m$ consumption goods and that the value function of a rational economic agent is given by 
$$\sup\limits_{c\in \mathcal A(x)}\mathbb E\left[\int_0^T e^{-\nu t}\log(c_1\dots c_m)dt\right],\quad x>0,$$
(with the same convention as the one specified after \eqref{primalProblem}),
where an impatience rate $\nu$ and a time horizon $T$ are positive constants. Note that in this case 
$\kappa_t = \frac{1 - e^{-\nu t}}{\nu}$, $t\in[0,T]$, i.e., $\kappa$ is deterministic.
Let us also suppose that there exists an $\R^d$-valued process $\gamma$, such that
$$b_t - \sigma_t\sigma_t^T \gamma_t = 0\quad (\ud \kappa\times \mathbb P)-a.e.$$
Let $\mathcal E$ denotes the Dol\'eans-Dade exponential. Then, using \cite[Theorem 3.1 and Example 4.2]{GollKallsen00} and Theorem \ref{mainTheorem}, we get
$$\widehat c^{*}_t(x) = \frac{x\nu}{1 - e^{-\nu T}}\mathcal E\left(\int_0^{\cdot}\gamma^T_sd\widetilde S_s\right)_t,\quad  x>0,$$
$$\widehat c^i_t(x) = \frac{\widehat c^{*}_t(x)}{S^i_t M},\quad i=1,\dots,m, \quad x>0,$$
$$\widehat Y_t(y) = \frac{y}{\mathcal E\left(\int_0^{\cdot}\gamma^T_s d\widetilde S_s\right)_t},\quad y>0,\quad t\in[0,T].$$



\section*{Example of a closed-form solution and dual characterization in an incomplete additive case}
Let us fix a filtered probability space $(\Omega, \mathcal F, \mathbb P)$, where $(\mathcal F_t)_{t\geq 0}$ is the augmentation of the filtration generated by a two-dimensional Brownian motion $(W^1, W^2)$. 
Let us suppose that there are two traded securities: a risk-free asset $B$, such that $$B_t  = e^{rt},\quad t>0,$$ 
where $r$ is a nonnegative constant, and a risky stock $\widetilde S$ with the dynamics 
$$d\widetilde S_t = \widetilde S_t\mu_t dt + \widetilde S_t\sigma_t dW^1_t,\quad t\geq 0,\quad\widetilde S_0 \in\R_{+},$$
where processes $\mu$ and $\sigma$ are such that $\theta_t = \frac{\mu_t - r}{\sigma_t}$, $t\geq 0$, the market price of risk process, follows the Ornstein-Uhlenbeck process
$$d\theta_t = -\lambda_{\theta}(\theta_t - \bar \theta)dt + \sigma_\theta\left(\rho dW^1_t + \sqrt{1-\rho^2} dW^2_t\right),\quad t\geq 0,\quad \theta_0\in\R_{+},$$
where $\lambda_{\theta}$, $\sigma_{\theta}$, and $\bar \theta$ are positive constants, $\rho\in(-1,1)$.
 Let us also assume that 
$\kappa$ corresponds to the expected utility maximization from terminal wealth, i.e., $\kappa =\ind_{\dbraco{T,\infty}}$, $T\in\R_{+}$, that there are $m$ consumption goods, where $S^i$, $i = 1,\dots,m$, are {\it deterministic},  and $$U(T,\omega, c_1,\dots,c_m) = \frac{c_1^p}{p} + \dots + \frac{c_m^p}{p},\quad (c_1,\dots,c_m)\in \R^m_{+},\quad \omega\in\Omega,$$
where $p<0$. 
Let us set
$$q\triangleq \frac{p}{1-p},\quad A\triangleq \sum\limits_{i=1}^m(S^i_T)^{-q},\quad and \quad 
B\triangleq A^{1-p}
.$$Then, by direct computations, we get $$U^*(T,\omega,x) = \frac{x^p}{p}B,\quad x>0.$$
Using the argument in \cite{KO96}, 
one can express the optimal trading strategy is $\widehat H(x)$ in a closed form in terms of a solution to a system of (nonlinear) ordinary differential equations (see \cite[p. 147]{KO96}), 
where $\widehat H_t(x)$ is the number of shares of the risky asset in the portfolio at time $t$, $t\in[0,T]$. With $\widehat X(x)$ such that $$d\widehat X_t(x) = \widehat H_t(x)d\widetilde S_t + (\widehat X_t(x)-\widehat H_t(x)\widetilde S_t)rdt,\quad \widehat X_0(x) = x,$$ using Theorem \ref{mainTheorem}
, we get 
$$\widehat {c}^{*}_T(x) = \widehat X_T(x),\quad x>0,$$
$$\widehat Y_T(y) = \frac{y}{\mathbb E\left[\left(\widehat {c}^{*}_T(1) \right)^p\right]}\left(\widehat {c}^{*}_T(1)\right)^{p-1},\quad y>0,$$
$$\widehat c^i_T(x) = \frac{\widehat c^{*}_T(x)}{A}(S^i_T)^{-(1+ q)},\quad x>0.$$

\section*{Example of a closed-form solution and dual characterization in an incomplete non-additive case}
Here we will suppose that
$\kappa =\ind_{\dbraco{T,\infty}}$, where $T\in \R_{+}$, and let 
$$U(t,\omega,c_1,c_2) = - \frac{c_1^{p_1}}{p_1}\frac{c_2^{p_2}}{p_2},\quad p_1<0,p_2<0,$$
i.e., there are two consumption goods.
One can see that $U(t,\omega,\cdot)$ is jointly concave, since the Hessian of $-U(t,\omega,\cdot)$ is positive definite on $\R^2_{++}$. We also extend $U(t,\omega,\cdot)$ to the boundary of $\R^2_{+}$ by $-\infty$. Then, with $p \triangleq p_1 + p_2<0$,  $U^*$ is given by
$$U^*(t,\omega, x) = \frac{x^p}{p}\frac{(-p_1)^{p_1 - 1}(-p_2)^{p_2-1}}{(-p)^{p-1}}(S^1_t)^{-p_1}(S^2_t)^{-p_2},\quad x>0.$$
Let us define $G \triangleq \frac{(-p_1)^{p_1 - 1}(-p_2)^{p_2-1}}{(-p)^{p-1}}(S^1_T)^{-p_1}(S^2_T)^{-p_2}$.
Then $U(T,\omega, x) = \frac{x^p}{p}G(\omega)$, $x>0$.
Let us suppose that $W^1$ and $W^2$ are two Brownian motions with a fixed correlation $\rho$ such that $0<|\rho|<1$. 
Let $(\mathcal F_t)_{t\geq 0}$ be the usual augmentation of the filtration generated by $W^1$ and $W^2$ and $(\mathcal G_t)_{t\geq 0}$ be the usual augmentation of the filtration generated by $W^2$.
We also assume that there is a bond $B$ and a stock $\widetilde S$ on the market. Their dynamics are given by
$$d\widetilde S_t = \widetilde S_t(\mu_t dt + \sigma_t dW^1_t ),\quad \widetilde S_0 \in\R,$$
$$dB_t = B_tr_t dt,\quad B_0 = 1,$$
where the drift $\mu$, volatility $\sigma$, and sport interest rate $r$ are bounded, progressively measurable processes with respect to $({\mathcal G}_t)_t\in[0,T]$,  and $\sigma$ is strictly positive.

Let us suppose that $S^1_T$ and $S^2_T$ are $\mathcal G_T$-measurable random variables with moments of all orders. Then $G$ is also $\mathcal G_T$-measurable random variable with moments of all orders (by H\"older's inequality) and the auxiliary value function $u^*$ defined in \eqref{eq:u*} satisfies the settings of \cite{Tehranchi04}. 
 Also, as $u^{*}(x)\geq\frac{x^p}{p}\mathbb E[G] >-\infty$ and since $V(T,\omega, \cdot)$ is negative-valued (thus, $v(y)\leq0$), the assumption \eqref{eq:finiteness} holds.

Let us set $$\lambda_t \triangleq \frac{\mu_t - r_t}{\sigma_t},\quad \delta \triangleq \frac{1 - p}{1 - p + \rho^2 p},\quad \frac{d\mathbb Q}{d\mathbb P} \triangleq \exp\left(-\frac{\rho^2 p^2}{2(1-p)^2}\int_0^T\lambda^2_sds + \frac{\rho p}{1-p}\int_0^T\lambda_sdW^2_s \right),$$
$$K_t \triangleq  \frac{p}{(1-p)}\left(\lambda_t + \rho\delta\frac{\beta_t}{\mathbb {E^Q}[\exp(\int_0^T (r_s/\delta) ds)|\mathcal F_t]}\right),\quad t\in[0,T].$$
Then, using \cite[Proposition 3.4]{Tehranchi04} and Theorem \ref{mainTheorem}, we deduce that 
$$\widehat c^{*}_T(x) = x\exp\left(\int_0^T\left(r + K_s\lambda_s -\tfrac{1}{2}K^2_s\right)ds + \int_0^T K_sdW^1_s\right),\quad x>0,$$
$$\widehat Y_T(y) =\frac{y}{\mathbb E\left[\left( \widehat c^{*}_T(1)\right)^p\right]}\exp\left(\int_0^T(p-1)\left(r + K_s\lambda_s -\tfrac{1}{2}K^2_s\right)ds + \int_0^T (p-1)K_sdW^1_s\right) ,\quad y>0,$$
$$\widehat c^i_T = \frac{\widehat c^{*}_T(x) p_i}{pS^i_T},\quad i = 1,2,\quad x>0,$$
are the optimizers to \eqref{dualProblem}, \eqref{eq:u*}, and \eqref{primalProblem}, respectively. From Theorem \ref{mainTheorem}, we conclude that for every $x>0$, $\widehat c^i_T(x)$, $i=1,2,$ and $\widehat Y_T(u'(x))$ are related via \eqref{eq:3212} and \eqref{eq:3221}.

\section{Proofs}\label{proofs}
We begin from a characterization of the utility process $U^*$ defined in \eqref{def:U*}.
\begin{lem}\label{lem:U*} Let $U$ satisfies Assumption \ref{as:U} and $U^{*}$ be defined in \eqref{def:U*}. Then, $U^{*}$ is an Inada-type utility process for $m=1$ in the sense of Assumption \ref{as:U}, i.e., $U^{*}$ satisfies:
\begin{enumerate}\item 
For every $(t,\omega)\in[0,\infty)\times\Omega$, the function $x\mapsto U^{*}(t,\omega,x)$ is  finite-valued on $(0,\infty)$, strictly concave, and strictly increasing. 

\item For every $(t,\omega)\in[0,\infty)\times\Omega$, the function $x\mapsto U^{*}(t,\omega,x)$ is continuously differentiable on $(0,\infty)$ and satisfies the Inada conditions
\[
\underset{z\downarrow0}{\lim} \, {U_x^{*}}(t,\omega,z) = \infty
\qquad\text{and}\qquad
\underset{z\uparrow \infty}{\lim} \, {U_x^{*}}(t,\omega,z) = 0.
\]
\item For every $(t,\omega)\in[0,\infty)\times\Omega$, at $z=0$, we have $$U^{*}(t,\omega,0)=\lim_{z\downarrow0}U^{*}(t,\omega,z)$$ (note that this value may be $-\infty$). 
\item For every $z\geq0$, the stochastic process $U^{*}(\cdot,\cdot,z)$ is optional.
\end{enumerate}
\end{lem}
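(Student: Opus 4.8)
The plan is to fix $(t,\omega)$, write $s=(S^1_t(\omega),\dots,S^m_t(\omega))\in\R^m_{++}$, and reduce everything to the scalar map
\[
g(x)\triangleq U^*(t,\omega,x)=\sup\bigl\{U(t,\omega,w):w\in\R^m_+,\ \langle s,w\rangle\le x\bigr\},\qquad x>0,
\]
so that (1)--(3) become analytic statements about $g$ and (4) is a separate measurability argument. For $x>0$ the budget set $D(x)=\{w\in\R^m_+:\langle s,w\rangle\le x\}$ is compact with nonempty interior, and $U(t,\omega,\cdot)$ is upper semicontinuous on $\R^m_+$ (it is $C^1$ on $\R^m_{++}$ and equals its $\limsup$ on the boundary by Assumption \ref{as:U}). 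Hence $U$ attains its maximum on $D(x)$, and this maximum is finite, since $U$ is finite at interior points of $D(x)$ and a USC function is bounded above on a compact set. Strict concavity of $U$ makes the maximizer $\hat w(x)$ unique; the Inada conditions force $\hat w(x)\in\R^m_{++}$ (moving mass toward a vanishing coordinate raises $U$ by the blow-up of $\partial_{x_i}U$), and strict monotonicity of $U$ forces the constraint to bind, $\langle s,\hat w(x)\rangle=x$. This already yields (1): finiteness; strict monotonicity of $g$ (for $x'>x$ one may raise a coordinate of $\hat w(x)$ and stay feasible); and strict concavity, because for $x_1\ne x_2$ the binding maximizers satisfy $\hat w(x_1)\ne\hat w(x_2)$, so $\lambda\hat w(x_1)+(1-\lambda)\hat w(x_2)$ is feasible for $\lambda x_1+(1-\lambda)x_2$ and strict concavity of $U$ gives the strict inequality.

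For (2) I would pass to conjugates. Writing $V(t,\omega,y)\triangleq\sup_{w\in\R^m_+}\bigl(U(t,\omega,w)-\langle w,y\rangle\bigr)$ for the $m$-dimensional concave conjugate, a direct computation (using that $g$ is increasing, so the constraint binds at any $\eta>0$) gives
\[
g^*(\eta)=\sup_{x>0}\bigl(g(x)-x\eta\bigr)=\sup_{w\in\R^m_+}\bigl(U(t,\omega,w)-\langle w,\eta s\rangle\bigr)=V(t,\omega,\eta s),\qquad \eta>0.
\]
Since $U$ is strictly concave, $C^1$, and essentially smooth with gradient map onto $\R^m_{++}$ (by Inada), standard concave duality (see \cite{Rok}) makes $V(t,\omega,\cdot)$ finite, $C^1$, and strictly convex on $\R^m_{++}$; composing with the injective linear map $\eta\mapsto\eta s$ keeps $g^*$ both $C^1$ and strictly convex. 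Transferring back, strict convexity of $g^*$ gives differentiability of $g$, so $g\in C^1(0,\infty)$. Equivalently, the envelope/first-order conditions $\nabla U(t,\omega,\hat w(x))=g'(x)\,s$ with $\langle s,\hat w(x)\rangle=x$ give the explicit formula $g'(x)=\partial_{x_i}U(t,\omega,\hat w(x))/s_i$, whose continuity follows from continuity of $x\mapsto\hat w(x)$ (maximum theorem plus uniqueness). The Inada conditions for $g$ then follow from this formula: as $x\downarrow0$ the binding constraint forces $\hat w(x)\to0$, so each $\partial_{x_i}U\to\infty$ and $g'(x)\to\infty$; as $x\to\infty$, were $g'(x)$ bounded below, the relation $\partial_{x_i}U(t,\omega,\hat w(x))=g'(x)s_i$ together with the Inada decay of $\partial_{x_i}U$ would bound every $\hat w_i(x)$ and contradict $\langle s,\hat w(x)\rangle\to\infty$, so $g'(x)\to0$. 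For (3), $x=0$ forces $w=0$, whence $U^*(t,\omega,0)=U(t,\omega,\mathbf 0)$; since $\hat w(x)\to0$ as $x\downarrow0$, upper semicontinuity gives $\limsup_{x\downarrow0}g(x)\le U(t,\omega,\mathbf 0)$, while monotonicity gives the reverse inequality, so $\lim_{x\downarrow0}g(x)=U^*(t,\omega,0)$ (possibly $-\infty$).

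Assertion (4) is where the measurable structure, rather than convex analysis, does the work. Fix $z\ge0$. For $z=0$ we have $U^*(\cdot,\cdot,0)=U(\cdot,\cdot,\mathbf 0)$, which is optional by Assumption \ref{as:U}. For $z>0$ I would reduce the $(t,\omega)$-dependent supremum to a countable one. Since the maximizer lies in $\R^m_{++}$ and $U(t,\omega,\cdot)$ is continuous there, approximating $\hat w(z)$ from strictly inside the constraint by a fixed countable dense set $\{w^{(n)}\}_{n\in\N}\subset\R^m_{++}$ (e.g.\ along the rays $(1-\varepsilon)\hat w(z)$, which are feasible with a margin) yields
\[
U^*(t,\omega,z)=\sup_{n\in\N}\Phi_n(t,\omega),\qquad
\Phi_n(t,\omega)\triangleq
\begin{cases}
U(t,\omega,w^{(n)}), & \sum_k S^k_t(\omega)w^{(n)}_k\le z,\\
-\infty, & \text{otherwise.}
\end{cases}
\]
For each fixed $n$ the process $U(\cdot,\cdot,w^{(n)})$ is optional by assumption, and $\{\sum_k S^k w^{(n)}_k\le z\}$ is an optional set as a sublevel set of the optional process $\sum_k S^k w^{(n)}_k$; hence each $\Phi_n$ is optional, and so is the countable supremum $U^*(\cdot,\cdot,z)$. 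The main obstacle is the interlocking pointwise analysis---uniqueness and interiority of $\hat w(x)$, the binding constraint, and the continuity of $x\mapsto\hat w(x)$---because these simultaneously underpin the strict concavity and $C^1$ property in (1)--(2) and the density reduction that makes the supremum in (4) countable; the conjugacy identity $g^*(\eta)=V(t,\omega,\eta s)$ is the cleanest device for the smoothness transfer, while the reduction to $\{w^{(n)}\}$ sidesteps any heavy measurable-selection machinery.
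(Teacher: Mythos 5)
Your proof is correct and follows the same overall strategy as the paper's: fix $(t,\omega)$, analyze the budget-constrained maximization pointwise (existence of a maximizer by compactness and upper semicontinuity, uniqueness and interiority from strict concavity and the Inada conditions, binding constraint from strict monotonicity, strict concavity of $U^*$ via the midpoint argument with $\hat w(x_1)\neq\hat w(x_2)$), and then obtain optionality by reducing the supremum to a countable dense subset of $\R^m_{++}$ --- the paper uses rational-valued vectors, which is the same device. Where you diverge is in the machinery for smoothness and for the limit at $z=0$: the paper treats $U^*(t,\omega,\cdot)$ as the image of $U(t,\omega,\cdot)$ under the linear map $A(t,\omega,\cdot)$ and cites the subgradient formula for image functions (\cite[Corollary D.4.5.2, Theorem D.4.5.1]{LH04}) for differentiability and the Inada conditions, and the closedness of the image function (\cite[Theorem 9.2]{Rok}) for the boundary limit; you instead use the scalar conjugacy identity $g^*(\eta)=V(t,\omega,\eta s)$ together with the smoothness/strict-convexity duality (or, equivalently, the first-order condition $\nabla U(\hat w(x))=g'(x)s$, which pins the supergradient of $g$ down to a singleton), and a direct upper-semicontinuity-plus-monotonicity squeeze for $\lim_{z\downarrow0}U^*(t,\omega,z)=U^*(t,\omega,0)$. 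Your route for the boundary limit is arguably cleaner, since it avoids the question of when the image of a closed concave function is closed (which the paper itself flags in a footnote); on the other hand, your conjugacy route carries the extra burden of checking finiteness of the $m$-dimensional conjugate $V(t,\omega,\cdot)$ on the ray $\{\eta s:\eta>0\}$, which your fallback first-order-condition argument sidesteps. Both the interiority of $\hat w(x)$ and the uniform decay of $\partial_{x_i}U$ used in your Inada-at-infinity contradiction are treated at the same level of detail as in the paper, so I do not count them as gaps.
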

\begin{proof} 
For every $(t,\omega)\in[0,\infty)\times\Omega$, 
as $U^{*}(t,\omega, \cdot)$ is an image function under an appropriate linear transformation of a {\it concave} function $U(t,\omega, \cdot)$, therefore using e.g., \cite[Theorem B.2.4.2]{LH04}, one can show that $U^{*}(t,\omega, \cdot)$ is concave.
In order to show strict concavity of $U^{*}(t,\omega, \cdot)$, one can proceed as follows. First, for some positive numbers $x_1\neq x_2$, let $\mathbf c^i= (c^{i,1}, \dots, c^{i,m})$ be such that 
\begin{equation}\label{9172}\begin{array}{rcl}
\sum\limits_{k=1}^mS^k_tc^{i, k}&\leq &x_i,\quad and\\
U^{*}\left(t,\omega,x_i\right) &=& U(t,\omega, c^{i,1},\dots,c^{i,m}),\quad i = 1,2.\\
\end{array}\end{equation}
The existence of such $\mathbf c^i$'s follows from compactness of the domain of the optimization problem in the definition of $U^{*}(t,\omega, x)$ (for every $x>0$) and upper semicontinuity of $U(t,\omega,\cdot)$. Since in \eqref{9172}, $\mathbf c^i$ necessarily satisfies inequality $\sum\limits_{k=1}^mS^k_tc^{i, k}\leq x_i$ with equality, $i=1,2$, from the strict monotonicity of $U(t,\omega,\cdot)$ in every spatial component and $x_1\neq x_2$, we deduce that $\mathbf c^1\neq \mathbf c^2$. Consequently, from {\it strict} concavity of $U(t,\omega,\cdot)$, we get
\begin{displaymath}\begin{array}{rcl}
U^{*}\left(t,\omega,\tfrac{x_1 + x_2}{2}\right) &=& \sup\limits_{\substack{(c_1,\dots,c_m)\in\R^m_{+}:\\\sum\limits_{k = 1}^mc_kS^k_t(\omega) \leq \tfrac{x_1 + x_2}{2}}} U(t,\omega, c_1,\dots,c_m) \\
&\geq & U\left(t,\omega, \tfrac{c^{1,1} +c^{2,1}}{2} ,\dots,\tfrac{c^{1,m}+c^{2,m}}{2}\right)\\
&> & \tfrac{1}{2}U\left(t,\omega,c^{1,1},\dots,c^{1,m}\right) + 
\tfrac{1}{2}U\left(t,\omega, c^{2,1} ,\dots,c^{2,m}\right)\\
&=&\tfrac{1}{2}U^{*}\left(t,\omega,x_1\right) + \tfrac{1}{2}U^{*}\left(t,\omega,x_2\right). \\
\end{array}
\end{displaymath}
Therefore, $U^{*}(t,\omega, \cdot)$ is strictly concave. 
As $U^{*}(t,\omega, \cdot)$  is increasing and strictly concave, it is {\it strictly} increasing.

For every $(t,\omega)\in[0,\infty)\times\Omega$ and $x>0$, using the
Inada conditions for $U(t,\omega, \cdot)$ one can show that there exists $(c_1,\dots,c_m)$ in the {\it interior} of the first orthant, such that $\sum\limits_{i = 1}^mc_iS^i_t(\omega) = x$ and $U^*(t,\omega, x) = U(t,\omega, c_1,\dots,c_m)$. 
As a result, differentiability of $U^{*}(t,\omega, \cdot)$ (in the third argument) follows from differentiability of $U(t,\omega, \cdot)$ and general properties of the subgradient of the image function, see e.g., \cite[Corollary D.4.5.2]{LH04}.
As $U^{*}(t,\omega, \cdot)$ is concave and differentiable, we deduce that $U^{*}(t,\omega, \cdot)$ is {\it continuously} differentiable in the interior of its domain, see \cite[Theorem D.6.2.4]{LH04}.
The Inada conditions for $U^{*}(t,\omega, \cdot)$ follow from the (version of the) Inada conditions for $U(t,\omega, \cdot)$ and \cite[Theorem D.4.5.1, p.192]{LH04}. 

For every $(t,\omega)\in[0,\infty)\times\Omega$, as $U(t,\omega,\cdot)$ is a closed concave function, using e.g., \cite[Theorem 9.2, p. 75]{Rok}, we deduce that $U^{*}(t,\omega,\cdot)$ is also a {\it closed} concave function\footnote{Note that in general, the image of a closed convex or concave function under a linear transformation need not be closed, see a discussion in \cite[p.97]{LH04}.}. In particular, we get $$U^{*}(t,\omega,0)=\lim_{z\downarrow0}U^{*}(t,\omega,z),\quad(t,\omega)\in[0,\infty)\times\Omega.$$

Finally, for every $x\geq 0$, $U^{*}(\cdot, \cdot, x)$ is optional as a supremum of countably many optional processes (where from continuity of $U(t,\omega, \cdot)$ in the relative interior of its effective domain, it is enough to take the supremum (in the definition of $U^*(t,\omega, \cdot)$) over the $m$-dimensional vectors, whose components take only {\it rational} values).

\end{proof}
\begin{rem}
Lemma \ref{lem:U*} asserts that $U^{*}$ satisfies Assumption 2.1 in \cite{Mostovyi2015}.
\end{rem}
 For every $x>0$, we denote by $\A^*(x)$ the set of $1$-dimensional optional processes $c^{*}$, for which there exists an $\R^d$-valued predictable $\widetilde S$-integrable process $H$, such that
\be\nonumber
X_t \triangleq x + \int_0^tH_u\,\ud \widetilde S_u - \int_0^tc^{*}_u\,\ud \kappa_u,
\qquad t\geq0,
\ee
is nonnegative, $\PP$-a.s. We also define
\be	\label{eq:u*}
u^*(x) \triangleq \sup_{c^* \in\A^*(x)}\expec\left[\int_0^{\infty}U(t,\omega,c^{*}_t(\omega))\,\ud\kappa_t(\omega)\right],\quad x>0.
\ee
with the convention analogous to \eqref{9171}:
\be	\nonumber
\expec\left[\int_0^{\infty}U^*(t,\omega,c^{*}_t(\omega))\,\ud\kappa_t(\omega)\right]\triangleq-\infty,\quad if \quad \expec\left[\int_0^{\infty}{U^*}^{-}(t,\omega,c^{*}_t(\omega))\,\ud\kappa_t(\omega)\right] = \infty.
\ee
\begin{proof}[Proof of Theorem \ref{mainTheorem}]
Let $x> 0$ be fixed and $c\in \mathcal A(x)$. Then $c^{*}_t \triangleq \sum\limits_{k= 1}^mc^k_tS^k_t$, $t\geq 0$, is an optional process such that $c^*\in\mathcal A^*(x)$. Therefore, 
\be\label{eq:3211}
u^*(x) \geq u(x) >-\infty,\quad x>0.
\ee 
Since $U^*$ satisfies the assertions of Lemma \ref{lem:U*}, standard techniques in convex analysis show that $-V^{*}$ has the same properties as $U^*$.  
Therefore, optimization problems \eqref{eq:u*} and \eqref{dualProblem} satisfy the assumptions of \cite[Theorem 3.2]{Mostovyi2015}
 . Consequently,  \cite[Theorem 3.2]{Mostovyi2015} applies, which in particular asserts that $u^*$ and $v$ are finite-valued and that for every $x>0$, the exists a strictly positive optional process $\widehat{c^{*}}(x)$, the unique maximizer to \eqref{eq:u*}.

Let us consider 
\be\label{eq:tmp}
\sup\limits_{\substack{(x_1,\dots,x_m)\in\R^m_{+}:\\\sum\limits_{k = 1}^mx_kS^k_t(\omega) \leq \widehat{c^{*}_t}(x)(\omega)}} U\left(t,\omega,x^1,\dots,x^m\right),\quad (t,\omega)\in [0,\infty)\times \Omega,
\ee
and define a correspondence 
$\varphi:[0,\infty)\times \Omega \twoheadrightarrow \R^m$ 
as follows $$\varphi(t,\omega) \triangleq \left\{(x_1,\dots,x_m)\in\R^m_{+}:\sum\limits_{k = 1}^mx_k S^k_t(\omega) \leq\widehat{c^{*}_t}(x)(\omega)\right\}.$$
From {\it strict} positivity of the $S^k$'s and positivity and $(\ud \kappa\times \PP)$-a.e. finiteness of $\widehat {c^{*}}(x)$ (by \cite[Theorem 3.2]{Mostovyi2015}), we deduce that $\varphi$ has nonempty\footnote{Note that the origin in $\R^m$ is in $\varphi(t,\omega)$ for every $(t,\omega)\in[0,\infty)\times \Omega$.} compact values $(\ud \kappa\times \PP)$-a.e. Let us consider the lower inverse of $\varphi^l$ defined by 
$$\varphi^l(G)\triangleq \left\{(t,\omega)\in[0,\infty)\times \Omega:~\varphi(t,\omega)\cap G \neq \emptyset \right\},\quad G\subset{\R^m}.$$
Let us also consider a subset of $\R^m$ of the form $A\triangleq [a_1,b_1]\times \dots \times [a_m,b_m],$ where $a_i$'s and $b_i$'s are real numbers. In view of the weak measurability of $\varphi$  (see \cite[Definition 18.1, p. 592]{AB}) that we are planning to show, it is enough to consider $b_i\geq 0$, $i=1,\dots,m$. In addition, let us set $\bar a_i = \max(0, a_i)$. One can see that for such a set $A$, as
$$\varphi^l(A) = \varphi^l([\bar a_1, b_1]\times \dots\times [\bar a_m, b_m]),$$ 
we have $$\varphi^l(A) = \left\{(t,\omega):~\sum\limits_{i = 1}^m \bar a_iS^i_t(\omega)\leq \widehat{c^{*}_t}(x)(\omega)\right\}.$$
As $\widehat{c^{*}}(x)$ and $S^i$'s are optional processes and since $\varphi^l\left(\bigcup\limits_{n\in\mathbb N}A_n\right) = \bigcup\limits_{n\in\mathbb N}\varphi^l(A_n)$ (see \cite[Section 17.1]{AB}, where $A_n$'s are subsets of $\R^m$), we deduce that $\varphi^l(G)\in\cO$ for every open subset $G$ of $\R^m$,
i.e., $\varphi$ is weakly measurable.
As $U$ is a Carath\'eodory function (see \cite[Definition 4.50, p. 153]{AB}), we conclude from \cite[Theorem 18.19, p. 605]{AB} that there exists an {\it optional} $\R^m$-valued process $\widehat c_t(x)$, $t\in[0,T]$, the maximizer of \eqref{eq:tmp} for $(\ud \kappa\times \PP)$-a.e. $(t,\omega)\in[0,\infty)\times \Omega$. The uniqueness of such a maximizer follows from {\it strict} concavity of $U(t,\omega, \cdot)$ (for every $(t,\omega)\in [0,\infty)\times\Omega$)\footnote{\cite[Theorem 18.19, p. 605]{AB} gives a maximizer, which is a measurable multifunction, and from the uniqueness of the maximizer it is a single valued multifunction, for which the concept of measurability coincides with measurability for functions.}.  As $\widehat{c^{*}}(x)\in\mathcal A^*(x)$, we deduce that $\widehat c(x)\in\A(x)$. Combining this with \eqref{eq:3211}, we conclude that $\widehat c(x)$ is the unique (up to an equivalence class) maximizer to \eqref{primalProblem}. 

For $x>0$, let $\widehat c^i_t(x)$, $i = 1,\dots,m$, denote the components of $\widehat c_t(x)$.
 As $\sum\limits_{i = 1}^m \widehat c^i_t(x)(\omega) S^i_t(\omega) = \widehat c_t^*(\omega)$, $(\ud\kappa\times \mathbb P)$-a.e., (where the argument here is similar to the discussion after \eqref{9172}) relations \eqref{453}, \eqref{eq:3221}, \eqref{451}, and \eqref{452} follow from \cite[Theorem 3.2]{Mostovyi2015}, whereas \eqref{eq:v_defl} results from \cite[Theorem 3.3]{Mostovyi2015} (equivalently, from \cite[Theorem 2.4]{MostovyiNUPBR}). In turn, combining \eqref{eq:3221} with 
\cite[Theorem D.4.5.1]{LH04}, we get
\be\nonumber
\begin{array}{rcl}
\widehat Y_t(\omega) &=& U^*_x\left(t,\omega,\widehat {c_t^*}(x)(\omega)\right) \\
&=& \left\{ s(t,\omega) \in \R:~
S^i_t(\omega)s(t,\omega) = {\partial_{x_i}U}\left(t,\omega, \widehat c^1(x)(\omega),\dots,\widehat c^m(x)(\omega)\right),~i = 1,\dots,m\right\} \\
&&\hspace{114mm} (\ud\kappa\times \mathbb P){\text -a.e.,} \\
\end{array}
\ee
i.e., \eqref{eq:3212} holds.
\end{proof}


%
%

\bibliographystyle{alpha}
\bibliography{lit2}

\begin{thebibliography}{CCFM17}

\bibitem[AB06]{AB}
C.~D. Aliprantis and K.~C. Border.
\newblock {\em Infinite dimensional analysis}.
\newblock Springer, 2006.
\newblock 3rd Edition.

\bibitem[Bre79]{Breeden79}
D.~Breeden.
\newblock An intertemporal asset pricing model with stochastic consumption and
  investment opportunities.
\newblock {\em Journal of Financial Economics}, 7(3):265--296, 1979.

\bibitem[CCFM17]{MostovyiNUPBR}
H.~Chau, A.~Cosso, C.~Fontana, and O.~Mostovyi.
\newblock Optimal investment with intermediate consumption under no unbounded
  profit with bounded risk.
\newblock {\em Journal of Applied Probability}, 54(3):710--719, 2017.

\bibitem[DS94]{DS94}
F.~Delbaen and W.~Schachermayer.
\newblock A general version of the fundamental theorem of asset pricing.
\newblock {\em Mathematische Annalen}, 300(1):463--520, 1994.

\bibitem[Fis75]{Fisher75}
S.~Fisher.
\newblock The demand for index bonds.
\newblock {\em Journal of Political Economy}, 83(3):509--534, 1975.

\bibitem[GK00]{GollKallsen00}
T.~Goll and J.~Kallsen.
\newblock Optimal portfolios for logarithmic utility.
\newblock {\em Stochastic Processes and their Applications}, 89(1):31--48,
  2000.

\bibitem[HUL04]{LH04}
J-B. Hiriart-Urruty and C.~Lemar\'{e}chal.
\newblock {\em Fundamentals of Convex Analysis}.
\newblock Springer, 2004.

\bibitem[Ina63]{Inada63}
Ken-Ichi Inada.
\newblock On a two-sector model of economic growth: Comments and a
  generalization.
\newblock {\em The Review of Economic Studies}, 2(30):119--127, 1963.

\bibitem[Kar10]{Kardaras10}
C.~Kardaras.
\newblock Finitely additive probabilities and the fundamental theorem of asset
  pricing.
\newblock In C.~Chiarella and A.~Novikov, editors, {\em Contemporary
  Quantitative Finance: Essays in Honour of {E}ckhard {P}laten}, pages 19--34.
  Springer, Berlin - Heidelberg, 2010.

\bibitem[Kar12]{Kardaras12}
C.~Kardaras.
\newblock Market viability via absence of arbitrage of the first kind.
\newblock {\em Finance and Stochastics}, 16(4):651--667, 2012.

\bibitem[Kar14]{Kar14}
C.~Kardaras.
\newblock A time before which insiders would not undertake risk.
\newblock In Y.~Kabanov, M.~Rutkowski, and T.~Zariphopoulou, editors, {\em
  Inspired by Finance}, pages 349--362. Springer, 2014.

\bibitem[KK07]{KaratzasKardaras07}
I.~Karatzas and K.~Kardaras.
\newblock The numeraire portfolio in semimartingale financial models.
\newblock {\em Finance and Stochastics}, 11(4):447--493, 2007.

\bibitem[KKS16]{KabanovKardarasSong}
Y.~Kabanov, C.~Kardaras, and S.~Song.
\newblock No arbitrage of the first kind and local martingale num\'{e}raires.
\newblock {\em Finance and Stochastics}, 20:1097--1108, 2016.

\bibitem[KO96]{KO96}
T.~S. Kim and E.~Omberg.
\newblock Dynamic nonmyopic portfolio behavior.
\newblock {\em The Review of Financial Studies}, 9(1):141--161, 1996.

\bibitem[KS98]{KS98}
I.~Karatzas and S.~Shreve.
\newblock {\em Brownian Motion and Stochastic Calculus}.
\newblock Springer, 1998.

\bibitem[KS99]{KS99}
D.~Kramkov and W.~Schachermayer.
\newblock The asymptotic elasticity of utility functions and optimal investment
  in incomplete markets.
\newblock {\em Annals of Applied Probability}, 9(3):904--950, 1999.

\bibitem[Mer69]{Merton69}
R.C. Merton.
\newblock Lifetime portfolio selection under uncertainty: the continuous-time
  case.
\newblock {\em Review of Economics and Statistics}, 51(3):247--257, 1969.

\bibitem[Mer09]{MertonCTF}
R.C. Merton.
\newblock {\em Continuous-Time Finance}.
\newblock Blackwell Publishing, 2009.

\bibitem[Mos15]{Mostovyi2015}
O.~Mostovyi.
\newblock Necessary and sufficient conditions in the problem of optimal
  investment with intermediate consumption.
\newblock {\em Finance and Stochastics}, 19(1):135--159, 2015.

\bibitem[Mos17]{MostovyiRE}
O.~Mostovyi.
\newblock Optimal investment with intermediate consumption and random
  endowment.
\newblock {\em Mathematical Finance}, 27(1):96--114, 2017.

\bibitem[Roc70]{Rok}
R.~T. Rockafellar.
\newblock {\em Convex Analysis}.
\newblock Princeton University Press, 1970.
\newblock Tenth printing, 1997.

\bibitem[Sio15]{Pietro1}
P.~Siorpaes.
\newblock Optimal investment and price dependence in a semi-static market.
\newblock {\em Finance and Stochastics}, 19(1):161--187, 2015.

\bibitem[Sio16]{Pietro2}
P.~Siorpaes.
\newblock Do arbitrage-free prices come from utility maximization?
\newblock {\em Mathematical Finance}, 26(3):602--616, 2016.

\bibitem[Teh04]{Tehranchi04}
M.~Tehranchi.
\newblock Explicit solutions of some utility maximization problems in
  incomplete markets.
\newblock {\em Stochastic Processes and their Applications}, 114:109--125,
  2004.

\bibitem[TS14]{TS14}
K.~Takaoka and M.~Schweizer.
\newblock A note on the condition of no unbounded profit with bounded risk.
\newblock {\em Finance and Stochastics}, 18(2):393--405, 2014.

\end{thebibliography}
\end{document}